\numberwithin{equation}{section}
\newtheorem{prop}{Proposition}
\newcommand{\deffeq}{\mathrel{\overset{\makebox[0pt]{\mbox{\normalfont\tiny\sffamily def}}}{=}}}
\newcommand{\R}{\mathbb{R}}
\newcommand{\E}{\mathbb{E}}
\newcommand{\U}{\mathcal{U}}
\newcommand{\eg}{{\textit{e.g.}, }}
\newcommand{\tth}{\footnotesize ^{\mbox{th}}\normalsize}
\newcommand{\ie}{{\it i.e., }}
\tikzset{>={Latex[width=2mm,length=2mm]}}
\renewenvironment{abstract}
{\small
	\begin{center}
		\bfseries \abstractname\vspace{-.5em}\vspace{0pt}
	\end{center}
	\list{}{
		\setlength{\leftmargin}{.0cm}%
		\setlength{\rightmargin}{\leftmargin}%
	}%
	\item\relax}
{\endlist}
\begin{document}

\pagestyle{myheadings}

\markboth{J.S. Pelck and R. Labouriau}
{J.S. Pelck and R. Labouriau}
\thispagestyle{empty}

\title{\Large
	Multivariate Generalised Linear Mixed Models
	With Graphical Latent Covariance Structure
}

\author[1] {Jeanett S. Pelck}

\author[1] {Rodrigo Labouriau 
	\thanks{Corresponding author: Rodrigo Labouriau, 
	rodrigo.labouriau@math.au.dk}}
\affil[1]{Department of Mathematics, Aarhus University, Denmark}
\date{July 2021}

\clearpage\maketitle
\thispagestyle{empty}

\begin{abstract}
This paper introduces a method for studying the correlation structure of a range of responses modelled by a multivariate generalised linear mixed model (MGLMM). 
The methodology requires the existence of clusters of observations and that each of the several responses studied is modelled using a generalised linear mixed models (GLMM) containing random components representing the clusters. 
We construct a MGLMM by assuming that the distribution of each of the random components representing the clusters is the marginal distribution of a (sufficiently regular) multivariate elliptically contoured distribution.
We use an undirected graphical model to represent the correlation structure of the random components representing the clusters of observations for each response. 
This representation allows us to draw conclusions regarding unknown underlying determining factors related to the clusters of observations.
Using a combination of an undirected graph and a directed acyclic graph (DAG), we jointly represent the correlation structure of the responses and the related random components. 
Applying the theory of graphical models allows us to describe and draw conclusions on the correlation and, in some cases, the dependence between responses of different statistical nature (\eg following different distributions, different linear predictors and link functions).
We present some simulation studies 
illustrating the proposed methodology.
\end{abstract}

\newpage

\tableofcontents

\newpage

\section{Introduction}
This paper introduces a method for studying the dependence structure of a range of responses modelled by a multivariate generalised linear mixed model (MGLMM) (see \citeauthor{Pelck2021A},  \citeyear{Pelck2021A}, for details). The methodology we suggest requires the existence of clusters of observations (or experimental units) and that each of the responses studied is modelled using a GLMM containing random components representing the clusters. We will construct an MGLMM by assuming that the distribution of each of the random components representing the clusters is the marginal distribution of a sufficiently regular multivariate elliptically contoured distribution (see  \citeauthor{Anderson2003}, \citeyear{Anderson2003}). This choice of the distribution of the random components includes, as a particular case, the multivariate normal distribution used in standard generalised linear mixed models (GLMMs), as the models considered in \cite{Breslow1993, Mcculloch2001, McCulloch1997}.

We will use an undirected graphical model (see \citealp{Lauritzen1996,Whittaker1990,Abreu2010}) to represent the correlation structure of the random components representing the clusters of observations for each response. This representation will allow us to conclude on unknown underlying determining factors related to the clusters of observations. Furthermore,  using a combination of an undirected graph and a directed acyclic graph (DAG), we jointly represent the correlation structure of the responses and the related random components. This representation arises naturally from the construction of the MGLMM we use and yields a known type of undirected graphical model, namely graphical models with block chain independence graph (BCG) as defined in \cite{Whittaker1990}. Remarkably, this construction will allow us to describe and draw conclusions on the correlation between the responses of different statistical nature, \eg responses modelled with GLMMs defined using various combinations of distributions, linear predictors and link functions (not necessarily the same for each response).

We will base the inference for the proposed graphical models on variants of tests for correlation under multivariate normality or multivariate elliptically contoured distributions studied in detail in \cite{Anderson2003}, from which we draw heavily. In the particular case where the random components are multivariate normally distributed, non-correlation will imply independence, which makes the conclusions of the analysis stronger. When the random components are not normally distributed but follow an elliptically contoured distribution, we will obtain slightly weaker conclusions since, in that case, lack of correlation implies only mean independence. 
\footnote{Recall that a random vector $X$ is mean independent of the random vector $Y$ when $E(X|Y=y) = E(X)$ for all $y$ in the support of the distribution of $Y$. It is well known that independence implies mean independence which implies non-correlation, but the reversed implications are in general not valid, see \cite{Wooldrige2010}.}

The paper is organised as follows. 
In Section~\ref{subsec:ModelDef}, we formulate a version of a multivariate generalised linear mixed model. For simplicity, we only specify one 
common clustering structure but the theory can easily be extended to include 
multiple clusterings.
In Section~\ref{subsec:GraphM}, we introduce essential 
concepts of graphical models that we will use to study the covariance structure of 
the random components and the response variables. These concepts are connected 
to the introduced multivariate model in Section~\ref{sec:conGraphGlmm}.
In Section~\ref{subsec:Test}, we describe statistical tests adapted from 
\cite{Anderson2003} and draw the connection to the theory of graphical models. 
In Section~\ref{subsec:SimStudy}, we perform a simulation study to study the 
distribution of the p-values in the simulated examples under the null hypothesis 
obtained using the statistical 
tests. Moreover, we study 
the power of the tests in the simulated examples on a grid consisting of values of 
the off-diagonal entry in the covariance matrix. Some concluding remarks are 
given in Section~\ref{sec:Con}.
Appendix~\ref{subsec:covEst} discusses how an estimate of 
the covariance matrix can be obtained based on consistent predictions of the 
random components. Appendix~\ref{App:testVden} presents details on how the 
density of the introduced test statistic can be evaluated in the case of Gaussian 
random components. 





\section{Multivariate Generalised Linear Mixed Models
	\label{sec:MulGLMM}}

In this section, we formulate a version of the multivariate generalised 
linear mixed model described in \cite{Pelck2021A}. 
These models are based on marginal GLMMs that extend the standard GLMMs in two directions: we assume the random components to be distributed according to an elliptical contoured distribution, instead of following a multivariate Gaussian distribution, and we assume the conditional distributions of each response, given the random components, to belong to a dispersion model, instead of an exponential dispersion model. The MGLMMs we will define require, however, the existence of clusters of observations and the presence of random components representing those clusters in each of the marginal GLMMs representing the responses. The result of this process is a rather flexible class of models that can be used in many practical applications; see for example \cite{pelck2020C,Pelck2021E,pelck2021D, Pelck2021F}.

\subsection{Model Definition
	\label{subsec:ModelDef}}

We define a $d$-dimensional multivariate generalised linear mixed 
model (\ie a MGLMM representing $d$ responses) with $n$ observations of 
the $j\tth$ marginal model, taking values in $\mathcal{Y}_j \subseteq \R$, for 
$j=1,\ldots,d$. Here $\mathcal{Y}_j$ is typically $\R$, $\R_+$, a compact real 
interval or $\mathbb{Z}_+$. 
Denote by  $Y_i^{[j]}$ the random variable representing the $i\tth$ observation of the $j\tth$ response, for $j=1,\ldots,d$ and $i=1, \ldots , n$.
 We assume that there exists a natural clustering of the observations causing dependence between observations arising from the same cluster (\eg grouping of observations within the same individual). 
We denote the cluster of the $i\tth$ observation, $Y_i^{[j]}$,  by $c(i)$ 
taking one of the values $1,\ldots,q$. Moreover, we assume that the clustering of the 
responses is 
independent of $d$, that is, the clusters are
represented in each marginal model. To ease the notation throughout the paper, we 
only consider one clustering mechanism but the methodology can be applied to a 
model with multiple clustering structures (\eg \cite{pelck2021D}).
In each marginal model, we consider random components each taking the same 
value for all responses within the corresponding cluster. These random 
components are denoted 
by $B^{[j]}_{c(i)}$ for $i=1,\ldots,n_j$ and $j=1,\ldots,d$.

Define the $d$-dimensional random vectors of random components, taking values 
in $\R^q$, by 
$\bm{B}^{[j]}=({B}_{1}^{[j]},\ldots,{B}_{q}^{[j]})^T$, the vector of 
responses $\bm{Y}^{[j]}=({Y}_{1}^{[j]},\ldots,{Y}_{n_j}^{[j]})$ and the 
vector of realisations $\bm{Y}^{[j]}$ by $\bm{y}^{[j]}$ (denoted 
\textit{observations}) of  
for 
$j=1,\ldots,d$. Moreover, consider the $q$-dimensional random vector 
$\bm{B}_l=({B}_{l}^{[1]},\ldots,{B}_{l}^{[d]})^T$ which we assume to be
elliptically contoured distributed \citep{Anderson2003} satisfying the 
following regularity conditions, for $l=1,\ldots,q$,
\begin{enumerate}
	\item The moments up to 
	fourth order of each marginal distribution exist
	\item Each of the marginal distributions is absolute continuous with respect to the Lebesgue measure
	\item All the conditional distributions exist and are elliptically 
	    contoured distributions also
	\item The location parameter vector is equal to zero.
\end{enumerate}
Furthermore, we assume that 
$\bm{B}_{l}$  is independent of $\bm{B}_{l'}$ for $l,l'=1,\ldots,q$ such that 
$l\neq l'$, \ie we assume that the random vectors representing different clusters 
are independent.
We define the density with respect to the Lebesgue measure of the elliptically contoured distribution by
\begin{align}
	\varphi (\bm{b} ;\bm{\Lambda})= \vert \bm{\Lambda}\vert^{-1/2} h\big(\bm{b}^T\bm{\Lambda}^{-1}\bm{b}\big), \label{eq:denEllip}
\end{align}
where $\bm{\Lambda}$ is a positive definite scatter matrix. The function $h(\cdot)$ is non-negative and satisfies that
\begin{align}
	\int_{\R^q} h(\bm{b}^T\bm{b})\,d\bm{b}=1. \label{eq:hInt}
\end{align}
When the density exists, the covariance matrix, $\bm{\Sigma}$, is proportional to 
$\bm{\Lambda}$, \ie the correlation matrix can be equivalently calculated from 
both $\bm{\Sigma}$ and  $\bm{\Lambda}$. An example of a commonly used 
distribution satisfying these regularity conditions is a multivariate Gaussian 
distribution with expectation zero and covariance matrix given by $\bm{\Sigma}$. 
Another example, that we will study later is the multivariate t-distribution. This 
distribution allows us to consider different degrees of tail heaviness. Note that 
because the 
moments of fourth order must exist in the multivariate t-distribution, the 
degrees of freedom should be larger than four.
 
According to the model, we assume that $Y_i^{[j]}$  is conditional distributed 
according to a dispersion model with dispersion parameter $\lambda_j\in 
\R_{+}$ given 
$B_{c(i)}^{[j]}$, and with conditional 
expectation
\begin{align*}
g_j\big(\mu_i^{[j]}(b)\big) \deffeq g_j(\E[Y_i^{[j]} \vert B_{c(i)}^{[j]}
=b]) 
= \bm{\beta}_j^T \bm{x}_{i}^{[j]}+{b}, \quad \forall \, b  \in \R,
\end{align*}
for all $i=1,\ldots,n_j$ and $j=1,\ldots ,d$. The vector $\bm{x}_i^{[j]}$ is a 
$p_j$ dimensional vector of explanatory variables corresponding to the vector of 
coefficients, $\bm{\beta}_j$. The explanatory variables might differ for the 
different responses. The function $g_j(\cdot)$ is a given link function, which is 
assumed to be strictly monotone, invertible and continuously differentiable. 
Below, we will suppress the dependence in $\mu_i^{[j]}(b)$ of $b$ to lighten the 
notation and denote the parameter space of the conditional means by 
$\mathcal{U}_j$.  
We define the conditional density corresponding to the conditional distribution of 
$Y_i^{[j]}$ given $B_{c(i)}^{[j]}=b$ with respect to a domination measure 
$\nu_j$ (defined on the measurable space $(\mathcal{Y}_j,\mathcal{A})$) by
\begin{align}\label{eq:fDenCondY}
	f(y_i^{[j]} \vert B_{c(i)}^{[j]}={b};\bm{\beta}^{[j]},\lambda_j)\deffeq p(y_i^{[j]};\mu_i^{[j]},\lambda_j)=
	a_j(y_i^{[j]};\lambda_j)\exp [-\tfrac{1}{2\lambda_j} \, 
	d_j(y_i^{[j]};\mu_i^{[j]})].
\end{align} 
The function $d_j: \mathcal{Y}_j\times \U_j \rightarrow \R_+$ is the \emph{unit 
	deviance} and, by definition, satisfies that $d_j(\mu ,\mu)=0$ and 
	$d_j(y,\mu)>0$ 
for all $(y,\mu)\in \mathcal{Y}_j\times \U_j$ such that $ y\neq \mu$. The 
function 
$a_j:\mathcal{Y}_j\times \R_+ \rightarrow \R_+$ is a given normalising 
function. 
We assume that the unit deviance is regular, that is, $d$ is twice continuously 
differentiable in $\mathcal{Y}_j\times  \U_j$ and 
$\partial^2d(\mu;\mu)/\partial\mu^2>0$ for all $\mu\in\U_j$. The function 
$V_j:\U_j\rightarrow\R_+$ given by $V_j(\mu)=2/\{\partial^2 
d_j(\mu,\mu)/\partial\mu^2\}$  for all $\mu$ in $\U_j$ is termed the 
\emph{variance 
	function} \citep{Cordeiro2021}. 
The following families of distributions are examples of dispersion models: 
Normal, Gamma, inverse Gaussian, von Mises, Poisson, and Binomial families.
This setup defines a version of the multivariate GLMM described in  
\cite{Pelck2021A} with the additional assumption that the multivariate 
distribution of the random components follow an elliptical contoured distribution.

%

%
%


\section{Representation of the Latent Covariance Structure via Graphical Models
	\label{sec:LatentStr}}
In this section, we describe and illustrate how we can use the theory of graphical 
models to examine the latent covariance structure 
of the random components in the multivariate model described above, and how 
this covariance structure affects the correlation between the responses. First, we 
give a short account for the theory of graphical models. For a more 
comprehensively description see \cite{Lauritzen1996} and \cite{Whittaker1990}.

\subsection{Basic Theory of Graphical Models
    \label{subsec:GraphM}}

Let $\mathcal{G}=(\mathcal{V},\mathcal{E})$ denote a graph defined with a set 
of vertices, $\mathcal{V}$, composed of random variables and a set of edges, 
$\mathcal{E}\in \mathcal{V} \times \mathcal{V}$. The set of edges, 
$\mathcal{E}$, consists of pairs of elements taken from $\mathcal{V}$. We 
distinguish between undirected independence graphs (UGs) and directed acyclic 
independence graphs (DAGs) but the two types of graphs can be combined as we 
will see below. The two types of graphs differ because of the underlying 
assumption of symmetry in the roles played by the variables in an UG, whereas in 
a DAG one variable can carry information on another without the converse being necessarily true. In the DAG we use an arrow from one variable pointing to another variable to indicate that the first variable carries information on the second.
In an UG, two vertices are connected by an edge if, and 
only if, they are not conditionally independent given the remaining variables in 
$\mathcal{V}$. This is the same definition used for DAGs with the conditioning 
set modified from the remaining variables to a set containing all remaining 
variables that carry information on one of the two vertices either direct or 
through the other vertices in $\mathcal{V}$.

In an UG, we say that there is a path connecting two vertices, say $v_1$ and 
$v_n$, if there 
exists a sequence of vertices $v_1, \ldots, v_n$ such that, for $i = 1, \ldots , n-1$, 
the pair $(v_i, v_{i+1})$ is in $\mathcal{E}$. A set of vertices $S$, separates 
two disjoint sets of 
vertices $A$ and $B$ in the graph  $\mathcal{G}=(\mathcal{V},\mathcal{E})$ 
when every path connecting a vertex in $A$ to a vertex in $B$ necessarily 
contains a vertex in $S$.  According to the theory of graphical models (see 
\citeauthor{Lauritzen1996},  \citeyear{Lauritzen1996} and 
\citeauthor{Perl2009}, \citeyear{Perl2009}), the UG defined above satisfies the 
\emph{separation principle}, which states that if a set of vertices $S $, separates 
two disjoint subsets of 
vertices $A $ and $B$ in the graph $\mathcal{G}=(\mathcal{V},\mathcal{E})$, 
then all variables in 
$A$ are independent of all variables in $B$ given $S$.
Moreover, if the subsets $A$ and $B$ 
are isolated (\ie there are no paths connecting a vertex in $A$ to a vertex in $B$), 
then the variables in $A$ are independent of  the variables in $B$.

A DAG possesses the Markov properties of its associated moral graph. Here the associated moral graph of a DAG is the UG obtained by the same vertex set but with a modified set of edges. The modified set of edges is formed by all the existing edges in the DAG replaced by undirected edges together with all edges necessary to eliminate forbidden Wermuth configurations. The latter means that for each vertex, we connect all vertices that have a directed edge towards the vertex in question with an undirected edge.

The two types of graphs can be combined into a block chain independence graph (BCG). In this graph, we assume that the vertex set $\mathcal{V}$ can be partitioned into subsets, called blocks, which are connected by directed edges but where all edges within the same block are undirected. As for the DAG, the BCG processes the same independence interpretation as its associated moral graph. For more information see \cite{Lauritzen1996} and \cite{Whittaker1990}.

\subsection{Connecting the Multivariate Model with the Theory of Graphical Models
\label{sec:conGraphGlmm}}
We connect the model formulated in Section~\ref{subsec:ModelDef} with the 
theory of graphical models by defining an undirected graph  
$\mathcal{G}=(\mathcal{V},\mathcal{E})$, with 
$\mathcal{V}=\{\bm{B}^{[1]},\ldots,\bm{B}^{[d]}\}$, where 
$\bm{B}^{[1]},\ldots,\bm{B}^{[d]}$  are the vectors of random components in 
the multivariate model described in Section~\ref{subsec:ModelDef}. In this 
context, the edges can only be interpret in terms of independence when the 
random components are Gaussian distributed. In the case of a non-Gaussian elliptically contoured distribution, two vertices are connected by an edge if, and only if, they 
are conditionally correlated given the remaining variables, which in this context implies conditional mean independence.
The set of vertices can also be formulated in terms of each variable in the model instead of vectors as above. In this case, the graphical representation will consist of $q$ separated cliques each containing the respective entry of the vectors $\bm{B}^{[1]},\ldots,\bm{B}^{[d]}$ due to the model assumptions. The choice of representation depends on the analysis and which choice that leads to the best discussion of the results. Note, that the results does not change only the visualisation. We will consider the vector representation below. 

The graph defined above is interpret in terms of the random components as 
follows: if, for example, $\bm{B}^{[1]}$ and $\bm{B}^{[2]}$ are connected 
with an edge, then these two random variables are conditionally correlated 
given $\{\bm{B}^{[3]},\ldots,\bm{B}^{[d]}\}$. Therefore, $\bm{B}^{[1]}$ 
carries some information on $\bm{B}^{[2]}$ not contained in the other variables. 
For example if the random components represent variation between different 
blocks in a field experiment, this means that there are some latent factors 
affecting the blocks, could be some characteristics of the soil, which affect the 
first and second response differently than the other responses.

We introduce an extension of the separation principle below, which we call the \textit{induced separation principle}. This can be used to draw general conclusions on the response variables. 
According to the model, the responses are independent given the random 
components. Therefore, conditional independence/un-correlation between, say, 
$\bm{B}^{[1]}$ and $\bm{B}^{[2]}$  given  
$\{\bm{B}^{[3]}, \allowbreak \ldots ,\allowbreak \bm{B}^{[d]}\}$ imply that $\bm{Y}^{[1]}$ and 
$\bm{Y}^{[2]}$ are conditionally uncorrelated given 
$\{\bm{B}^{[3]},\ldots,\bm{B}^{[d]}\}$.
By including the random variables $Y_i^{[j]}$ (for $j=1, \ldots, d$ and $i = 1,\ldots n$) in the set of vertices, and by taking the 
model assumptions into considerations, it is possible to formulate a block chain 
independence graph that represents the covariance structure both among the 
random components but also within the response variables. The theory of BCG 
makes it possible to extend the separation principle to a version that applies to the 
total graph including both the random components and the response variables. That 
is, by looking at the moral graph, we can determine all conditional 
uncorrelations \citep[Theorem. 3.6.1]{Whittaker1990}.

We will describe how the BCG can be constructed in the multivariate model 
described in Section~\ref{subsec:ModelDef}. For simplicity we only consider 
one common clustering mechanism in this model, however, below we will argue 
how the BCG can be constructed in the case of multiple random components.
We define a block chain independence graph $\mathcal{G}'=(\mathcal{V}',\mathcal{E}')$ \citep{Whittaker1990} by letting $\mathcal{V}'=\mathcal{V}\cup \{\bm{Y}^{[1]},\ldots,\bm{Y}^{[d]}\}$  and $\mathcal{E}'=\mathcal{E}\cup \mathcal{E}_{{Y}}$, where $\mathcal{V}$  and $\mathcal{E}$ are defined as above. Here, $\mathcal{E}_{{Y}}$ include directed edges from $\bm{B}^{[j]}$ to $\bm{Y}^{[j]}$ for $j=1,\ldots,d$, whereas $\mathcal{E}$ only include 
undirected edges. Usually, the way to separate undirected and directed edges in $\mathcal{E}'$ is to use the notation that if there is an directed edge from $V_i$ to $V_j$, the edge $(i,j)$ is included in $\mathcal{E}'$. However, if there is an undirected edge from $V_i$ to $V_j$ both the edge $(i,j)$ and $(j,i)$ are included in $\mathcal{E}'$.
The essential property of this graph is that by construction, any edge is undirected for intra-block vertices, and directed for inter-block vertices with direction from the random components to the response variables (the blocks are here defined by $\mathcal{V}$ and  $\mathcal{V}_{\bm{Y}}=\{\bm{Y}^{[1]},\ldots,\bm{Y}^{[d]}\}$). 
The induced separation principle implies that if $\mathcal{S}$ separates two 
disjoint subsets of vertices, $\mathcal{A}$ and $\mathcal{B}$ in $\mathcal{V}$, 
and $\mathcal{A}'$ and $\mathcal{B}'$ are the sets of the corresponding 
response variables, respectively, then all response variables in $\mathcal{A}'$  are 
conditionally uncorrelated of the variables in $\mathcal{B}'$ given 
the random components in $\mathcal{S}$.

In the case of multiple clustering mechanisms, we redefine $\mathcal{V}'$ to be 
the union of all sets of random components and the responses, that is,  
$\mathcal{V}'=\mathcal{V}_1\cup \ldots \cup \mathcal{V}_b \cup  
\mathcal{V}_{\bm{Y}}$, where $b$ is the total number of clustering 
mechanisms, 
and $\mathcal{V}_i$ is the set containing random vectors corresponding to the 
random components associated with the $i\tth$ clustering. The edges in this graph 
consist of 
the undirected edges inside each block together with directed edges from each 
random vector pointing 
towards the corresponding response variable (between the blocks). Under the 
model, we assume that 
each block of random components is independent of the others. Therefore, we do 
not need to connect 
the blocks $\mathcal{V}_1,\ldots,\mathcal{V}_b$ with an edge. This
structure is illustrated in Figure~\ref{fig:chainGraphGenerel}. 

The moral version of such a graph can be difficult to interpret in terms of 
conditional uncorrelation between the response variables. In that 
case, we suggest 
to either only consider the undirected graphs for the random components 
excluding the response vectors, or if one of the clustering mechanisms are of 
particular interest, we can restrict ourself to only examining the graph that 
includes the random components and the response variables of interest. In the 
latter case, we are only able to interpret the graph on individual level. For 
example, in a study with two clustering mechanisms: one representing individual 
variation and another clustering the individuals in different groups, we might only 
be interested in examining the correlation between different 
responses caused by the individual clustering structure. Therefore, we can 
consider a graphical representation of the covariance structure of the individual 
variation for each individual and thus, avoid comparing individuals within the same 
group for which the corresponding responses will be correlated do to the random 
component grouping the individuals. Thus, in the complete block chain 
independence graph, many of the responses will only be conditional 
uncorrelated after conditioning on multiple clusterings. 

An example of a block chain independence graph representing a three dimensional 
model with two clusterings and  it's corresponding moral graph is presented in 
Figure~\ref{fig:chainGraphExample}. In this example we observe from the moral 
graph that $\bm{Y}^{[1]}$ and $\bm{Y}^{[3]}$ are conditionally 
uncorrelated given $\bm{B}_1^{[2]}$ and  $\bm{B}_2^{[2]}$. If 
there was an edge connecting $\bm{B}_1^{[2]}$ and $\bm{B}_3^{[2]}$, then 
$\bm{Y}^{[1]}$ and $\bm{Y}^{[3]}$ would only be conditionally independent 
given all the random components.

 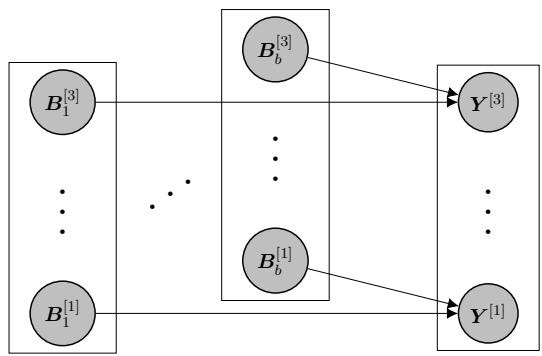
\begin{figure}[htbp!]
 	\centering
 	\scalebox{0.7}{
 		\begin{tikzpicture}[
 			squarednode/.style={rectangle, draw=black, fill=lightgray, thick},
 			roundnode/.style={circle, draw=black, fill=lightgray, thick}]
 			\node [style=roundnode,minimum size=1cm] (B1) at (-2, 0) {$\bm{B}^{[1]}_1$};
 			\node [style=roundnode,minimum size=1cm] (B3) at (-2, 4) {$\bm{B}^{[3]}_1$};
 			\path (B1) -- (B3) node [font=\Huge, midway, sloped] {$\dots$};
 			\node[draw,inner xsep=1em,fit=(B1)(B3)] {};
 			\node [style=roundnode,minimum size=1cm] (U1) at (2, 1) {$\bm{B}^{[1]}_b$};
 			\node [style=roundnode,minimum size=1cm] (U3) at (2, 5) {$\bm{B}^{[3]}_b$};
 			\path (U1) -- (U3) node [font=\Huge, midway, sloped] {$\dots$};
 			\node[draw,inner xsep=1em,fit=(U1)(U3)] {}; 
 			\node [style=roundnode,minimum size=1cm] (Y1) at (6, 0) {$\bm{Y}^{[1]}$};
 			\node [style=roundnode,minimum size=1cm] (Y3) at (6, 4) {$\bm{Y}^{[3]}$};
 			\path (Y1) -- (Y3) node [font=\Huge, midway, sloped] {$\dots$};
 			\node[draw,inner xsep=1em,fit=(Y1)(Y3)] {};
 			\draw[->] (B1) to (Y1);
 			\draw[->] (B3) to (Y3);
 			\draw[->] (U1) to (Y1);
 			\draw[->] (U3) to (Y3);
 			\node [font=\Huge]  at (0,2.5) {$\iddots$};
 	\end{tikzpicture}}
 	\caption{Illustration of the structure of a block chain independence graph, ignoring the undirected edges inside each of the $(b+1)$ blocks. }
 	\label{fig:chainGraphGenerel}
 \end{figure}

 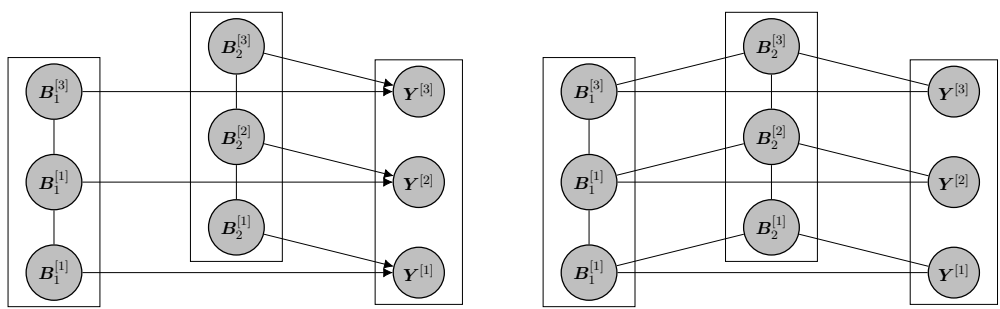
\begin{figure}[htbp!]
 	\centering
 	\scalebox{0.6}{
 		\begin{tikzpicture}[
 			squarednode/.style={rectangle, draw=black, fill=lightgray, thick},
 			roundnode/.style={circle, draw=black, fill=lightgray, thick}]
 			\node [style=roundnode,minimum size=1cm] (B1) at (-2, 0) {$\bm{B}^{[1]}_1$};
 			\node [style=roundnode,minimum size=1cm] (B2) at (-2, 2) {$\bm{B}^{[1]}_1$};
 			\node [style=roundnode,minimum size=1cm] (B3) at (-2, 4) {$\bm{B}^{[3]}_1$};
 			\node[draw,inner xsep=1em,fit=(B1)(B3)] {};
 			\node [style=roundnode,minimum size=1cm] (U1) at (2, 1) {$\bm{B}^{[1]}_2$};
 			\node [style=roundnode,minimum size=1cm] (U2) at (2, 3) {$\bm{B}^{[2]}_2$};
 			\node [style=roundnode,minimum size=1cm] (U3) at (2, 5) {$\bm{B}^{[3]}_2$};
 			\node[draw,inner xsep=1em,fit=(U1)(U3)] {}; 
 			\node [style=roundnode,minimum size=1cm] (Y1) at (6, 0) {$\bm{Y}^{[1]}$};
 			\node [style=roundnode,minimum size=1cm] (Y2) at (6, 2) {$\bm{Y}^{[2]}$};
 			\node [style=roundnode,minimum size=1cm] (Y3) at (6, 4) {$\bm{Y}^{[3]}$};
 			\node[draw,inner xsep=1em,fit=(Y1)(Y3)] {};
 			\draw[->] (B1) to (Y1);
 			\draw[->] (B2) to (Y2);
 			\draw[->] (B3) to (Y3);
 			\draw[->] (U1) to (Y1);
 			\draw[->] (U2) to (Y2);
 			\draw[->] (U3) to (Y3);
 			\draw[-] (B1) to (B2);
 			\draw[-] (B2) to (B3);
 			\draw[-] (U1) to (U2);
 			\draw[-] (U2) to (U3);
 	\end{tikzpicture}
  	\hspace{1.5cm}
  	\begin{tikzpicture}[
 	squarednode/.style={rectangle, draw=black, fill=lightgray, thick},
 	roundnode/.style={circle, draw=black, fill=lightgray, thick}]
 	\node [style=roundnode,minimum size=1cm] (B1) at (-2, 0) {$\bm{B}^{[1]}_1$};
 	\node [style=roundnode,minimum size=1cm] (B2) at (-2, 2) {$\bm{B}^{[1]}_1$};
 	\node [style=roundnode,minimum size=1cm] (B3) at (-2, 4) {$\bm{B}^{[3]}_1$};
 	\node[draw,inner xsep=1em,fit=(B1)(B3)] {};
 	\node [style=roundnode,minimum size=1cm] (U1) at (2, 1) {$\bm{B}^{[1]}_2$};
 	\node [style=roundnode,minimum size=1cm] (U2) at (2, 3) {$\bm{B}^{[2]}_2$};
 	\node [style=roundnode,minimum size=1cm] (U3) at (2, 5) {$\bm{B}^{[3]}_2$};
 	\node[draw,inner xsep=1em,fit=(U1)(U3)] {}; 
 	\node [style=roundnode,minimum size=1cm] (Y1) at (6, 0) {$\bm{Y}^{[1]}$};
 	\node [style=roundnode,minimum size=1cm] (Y2) at (6, 2) {$\bm{Y}^{[2]}$};
 	\node [style=roundnode,minimum size=1cm] (Y3) at (6, 4) {$\bm{Y}^{[3]}$};
 	\node[draw,inner xsep=1em,fit=(Y1)(Y3)] {};
 	\draw[-] (B1) to (Y1);
 	\draw[-] (B2) to (Y2);
 	\draw[-] (B3) to (Y3);
 	\draw[-] (U1) to (Y1);
 	\draw[-] (U2) to (Y2);
 	\draw[-] (U3) to (Y3);
 	\draw[-] (B1) to (B2);
 	\draw[-] (B2) to (B3);
 	\draw[-] (U1) to (U2);
 	\draw[-] (U2) to (U3);
    \draw[-] (B1) to (U1);
    \draw[-] (B2) to (U2);
    \draw[-] (B3) to (U3);
\end{tikzpicture}}
 	\caption{Example of a BCG for a three dimensional multivariate generalised 
 	linear mixed model with two random components, and it's corresponding moral 
 	graph (to the right).}
 	\label{fig:chainGraphExample}
 \end{figure}

 \FloatBarrier
\subsection{Testing the Covariance Structure
	\label{subsec:Test}}
In this section, we formulate a statistical test based on the results in 
\cite{Anderson2003}. Using this test, it is possible to test for (conditional) 
uncorrelation between pairs or groups of variables. 

We introduce some general notation that we will use to describe the statistical 
test in the case where the random components are assumed to be Gaussian 
distributed and the more general setup where we assume an elliptical contoured 
distribution. In both cases, we can test for uncorrelation between 
groups of variables either directly or conditional on a separating set. We show the 
conditional test but the approach is equivalent in the direct case.

Let $\bm{X}=(\bm{X}^T_{(1)},\ldots, \bm{X}^T_{(k)})^T$ be a $p$-dimensional random vector distributed according to an elliptically contoured distribution (including the special case of a Gaussian distribution) with location parameter equal to zero and a positive definite scatter matrix 
\begin{align*}
	\bm{\Lambda}= \begin{bmatrix}
		\bm{\Lambda}_{11} & \bm{\Lambda}_{12} & \cdots & \bm{\Lambda}_{1k}\\
		\bm{\Lambda}_{21} & \bm{\Lambda}_{22} & \cdots & \bm{\Lambda}_{2k}\\
		\vdots           &   \vdots         &        &  \vdots \\
		\bm{\Lambda}_{k1} & \bm{\Lambda}_{k2} & \cdots & \bm{\Lambda}_{kk}
	\end{bmatrix}=
\begin{bmatrix}
		\tilde{\bm{\Lambda}}^{(k-1)} & \tilde{\bm{\Lambda}}^{(k-1,k)}\\
\tilde{\bm{\Lambda}}^{(k,k-1)} & {\bm{\Lambda}}_{kk}
\end{bmatrix},
\end{align*}
which is proportionel to the covariance matrix $\bm{\Sigma}$.
We assume that the density of $\bm{X}$ exists with respect to the Lebesgue measure. Moreover, we assume that the conditional distribution of  $\bm{X}^{(k-1)}=(\bm{X}^T_{(1)},\ldots, \bm{X}^T_{(k-1)})$ given $\bm{X}_{(k)}$ exists. The distribution of   $\bm{X}^{(k-1)} \vert \bm{X}_{(k)}$ is also elliptically contoured distributed with scatter matrix
\begin{align}
	\bm{\Lambda}_{\cdot k}=\tilde{\bm{\Lambda}}^{(k-1)}-\tilde{\bm{\Lambda}}^{(k-1,k)} \bm{\Lambda}_{kk}^{-1}\tilde{\bm{\Lambda}}^{(k,k-1)} , \label{eq:condForm}
\end{align}
which is proportional to the covariance matrix in the conditional distribution 
\citep{Anderson2003}. Consequently, the formulas which apply in the normal 
case apply in this more general  setting as well. 

We would like to test the null hypothesis that the subvectors $\bm{X}_{(1)},\allowbreak \ldots, \allowbreak \bm{X}_{(k-1)}$ are independent given $\bm{X}_{(k)}$. This is equivalent to examining if $\bm{\Lambda}_{\cdot k}$ is on the form
\begin{align*}
	\bm{\Lambda}_{\cdot k}= \begin{bmatrix}
		\bm{\Lambda}_{11\cdot k} & \bm{0} & \cdots & \bm{0}\\
		\bm{0} & \bm{\Lambda}_{22\cdot k} & \cdots & \bm{0}\\
		\vdots           &   \vdots         &        &  \vdots \\
		\bm{0} & \bm{0} & \cdots & \bm{\Lambda}_{(k-1)(k-1)\cdot k}
	\end{bmatrix}.
\end{align*}

We first treat the special case where $\bm{X}$ is Gaussian distributed below. In this case, the statistical test will be exact. Second, we present an asymptotic test when the number of realisations of $\bm{X}$, denoted $q$, goes to infinity which  is valid in the case of a general elliptically contoured distribution.

\subsubsection{Normally Distributed Random Component
\label{subsec:normalTest}}

Here, we show a test for conditional independence for subsets of variables in a 
Gaussian distributed vector.
Let $\bm{A}_{\cdot k}$ be the maximum likelihood estimate of  
$\bm{\Lambda}_{\cdot k}$ or another estimate proportional to the maximum 
likelihood estimate based on $q$ observations ($\bm{A}_{\cdot k}$ can also be 
calculated from a maximum likelihood estimate of $\bm{\Sigma}$ using the 
formula in \eqref{eq:condForm}).

The test statistic we will consider is given by 
\begin{align*}
	V=\frac{\det (\bm{A}_{\cdot k})}{\prod_{i=1}^{k-1} \det(\bm{A}_{ii \cdot k})},
\end{align*}
that is, $V=\lambda^{q/2}$ where $\lambda$ is the likelihood ratio statistic and $q$ the number of observations (in the setup of multivariate GLMMs this is the number of groups of the random component). 

It can be shown that under the null hypothesis  \citep{Anderson2003} the 
distribution of $V$ is given by
\begin{align}
	V \sim \prod_{i=2}^{k-1}\prod_{j=1}^{d_i} Z_{ij}, \label{eq:testV}
\end{align}
where the random variables $Z_{21},\ldots,Z_{(k-1)d_{k-1}}$ are independent and $Z_{ij}\sim \text{Beta}  \big (\tfrac{1}{2}[q-\bar{d}_i-j], \tfrac{1}{2} \bar{d}_{i} \big)$ with $\bar{d}_i=d_1+\ldots+d_{i-1}$ for $i=2,\ldots,(k-1)$ and $j=1,\ldots,d_i$.


The continuity of the determinant function implies that $V$ remains constant for any estimated scatter matrix proportional to the maximum likelihood estimate, and thus the distribution is still exact. For a consistent estimator of the covariance matrix or scatter matrix, the distribution is only asymptotic.

\subsubsection{Elliptical Contoured Distributed Random Component
\label{subsubsec:generalTest}}
Under the assumption of a general elliptical contoured distribution, 
\cite{Anderson2003} shows that the following test statistic can be used to test 
asymptotically if the correlation between groups of variables are zero for $q$ 
going to 
infinity (either direct or conditioning on a separating set).

Let $\bm{A}_{\cdot k}$ denote the sample estimate of the covariance matrix of $\bm{X}^{(k-1)}$ given $\bm{X}_k$. This can be estimated directly or calculated using the formula in \eqref{eq:condForm} on the sample covariance matrix given by
\begin{align*}
	\bm{A}= \frac{1}{q-1}\sum_{i=1}^q (\bm{x}_i-\bar{\bm{x}})(\bm{x}_i-\bar{\bm{x}})^T.
\end{align*}
Define $\tilde{\bm{A}}_{\cdot k}^{(i-1)}$ as the $\bar{d}_i\times \bar{d}_i$ dimensional sub-matrix of $\bm{A}_{\cdot k }$ with the first $\bar{d}_i=d_1+\dots+d_{i-1}$ rows and columns of $\bm{A}_{\cdot k}$. Moreover, let $\tilde{\bm{A}}_{\cdot k}^{(i,i-1)}$ denote the sub-matrix corresponding to the first $\bar{d}_i$ columns and the rows $\bar{d}_i+1,\dots, \bar{d}_i+d_i$ of $\bm{A}_{\cdot k }$.
Define
\begin{align*}
	\bm{H}_i& = (q-1)\tilde{\bm{A}}_{\cdot k}^{(i,i-1)} \tilde{\bm{A}}_{\cdot k}^{(i-1)} (\tilde{\bm{A}}_{\cdot k}^{(i,i-1)})^T\\
	\bm{G}_i&= (q-1) \bm{A}_{ii\cdot k}-\bm{H}_i\\
	V_i &= \frac{\vert \bm{G}_i \vert}{\vert \bm{G}_i+\bm{H}_i \vert},
\end{align*}
where $\bm{A}_{ii\cdot k}$ is the $(i,i)^{th}$ block matrix of $\bm{A}_{\cdot k}$.

The test statistic for the null hypothesis that $(\bm{X}_1,\ldots,\bm{X}_{k-1})$ are conditionally independent given $\bm{X}_k$ is formulated as
\begin{align*}
	-q\sum_{i=2}^{k-1} \log V_i,
\end{align*}
which converges in distribution to $(1+\kappa)\chi^2_f$ for $q$ going to infinity, \\
$f=\sum_{i=2}^{k-1} \bar{d}_i d_i$ and
\begin{align*}
	\kappa = (\bar{d}_k(\bar{d}_k+2))^{-1}\E[(\bm{X}^{(k-1)})^T\bm{\Sigma}_{\cdot k}^{-1}\bm{X}^{(k-1)}]^2-1.
\end{align*}
We can estimate the kurtosis parameter by
\begin{align*}
	\hat{\kappa}=(\bar{d}_k(\bar{d}_k+2))^{-1}\frac{1}{q}\sum_{i=1}^q [(\bm{x}_i^{(k-1)})^T\bm{A}_{\cdot k}^{-1}\bm{x}_i^{(k-1)}]^2-1.
\end{align*}

\subsubsection{Simulation Study of Convergence Rate 
\label{sec:simStudConvRat}}
In this section we study the power of the introduced tests as a function of $q$ in a 
simulated example, that is, the probability of accepting the hypothesis of 
independence/un-correlation when it is true. Working with a five percent 
significance level this 
should be close to $95$ percent when $q$ is large enough.  
For different values of  $q$, we simulate $10\,000$ times $q$ random variables 
from a $4$ dimensional Gaussian and t-distribution, respectively, with 
expectation zero and covariance matrix
\begin{align*}
	\bm{\Sigma}=
	\begin{pmatrix}
		 0.4083  & 0.000000 & 0.000000  & 0.000000\\
		0.0000   & 0.456510  & -0.451965 & 0.265170\\
		0.0000   & -0.451965  & 0.837030  & -0.491090\\
		0.0000   & 0.265170   & -0.491090  & 0.524365
	\end{pmatrix}.
\end{align*}
For each of the $10\,000$ simulations, we estimate the sample covariance matrix 
and apply the appropriate test described above. Based on the calculated p-values, 
we can examine how many times the estimated p-value is above five percent and 
divide this by the number of simulations in  order to obtain an estimate of the 
probability of accepting the hypothesis of independence/un-correlation when it is 
true. If the 
distributional assumption of the test statistic is correct, the estimated value will 
be close to $95$ percent when working with a five percent significance level. 

We formulate the simulated model formally by letting 
$\bm{X}=(X_1,\ldots,X_4)$ and $\bm{Y}=(Y_1,\ldots,Y_4)$ denote random 
vectors distributed according to a multivariate Gaussian and t-distribution, 
respectively, with mean zero and covariance matrix $\bm{\Sigma}$. In the 
multivariate t-distribution, the degrees of freedom is assumed to be five.  Let 
$\bm{X}_1,\ldots,\bm{X}_{q}$ and $\bm{Y}_1,\ldots,\bm{Y}_{q}$ denote 
i.i.d. copies of $\bm{X}$ and $\bm{Y}$, respectively, corresponding to the 
simulated random variables in each of the $10\,000$ rounds. Using the 
realizations of these variables denoted $\bm{x}_1,\ldots,\bm{x}_{q}$ and 
$\bm{y}_1,\ldots,\bm{y}_{q}$, we estimate in each of the $10\,000$ rounds 
$\bm{\Sigma}$ by
\begin{align*}
	\hat{\bm{\Sigma}}_{\bm{x}}& = \tfrac{1}{q}\sum_{i=1}^{\scriptscriptstyle{q}} (\bm{x}_i-\bar{\bm{x}})(\bm{x}_i-\bar{\bm{x}})^T\\
		\hat{\bm{\Sigma}}_{\bm{y}}& = \tfrac{1}{q}\sum_{i=1}^{\scriptscriptstyle{q}} (\bm{y}_i-\bar{\bm{y}})(\bm{y}_i-\bar{\bm{y}})^T,
\end{align*}
where $\bar{\bm{x}}$ and $\bar{\bm{y}}$ are the estimated means. In each of 
the the $10\,000$ rounds, we test the hypothesis that
\begin{align*}
	H: \, \bm{\Sigma}_{13} = \bm{\Sigma}_{31} = 0,
\end{align*}
for both the multivariate Gaussian and t-distribution based on the estimated covariance matrices $\hat{\bm{\Sigma}}_{\bm{x}}$ and $\hat{\bm{\Sigma}}_{\bm{y}}$, respectively, with $\bm{\Sigma}_{ij}$ denoting the $(i,j)\tth$ entry in $\bm{\Sigma}$. Note, that vi also tested $\bm{\Sigma}_{12}=0$ and $\bm{\Sigma}_{14}=0$ but since the estimated power curves are similar, we only present one of them here. The estimated power curve for $H$ can be found in Figure \ref{fig:PowerPlotNorm } and \ref{fig:powerplotT} for the Gaussian and t-distribution, respectively. We conclude, that we need a much higher number of levels in the multivariate t-distribution, as expected. This is a result of the heavier tail and the fact that the distribution of the test statistic is only asymptotic where the distribution is exact in the Gaussian case. 

\begin{figure}[htbp]
	\centering
\includegraphics[scale=0.8]{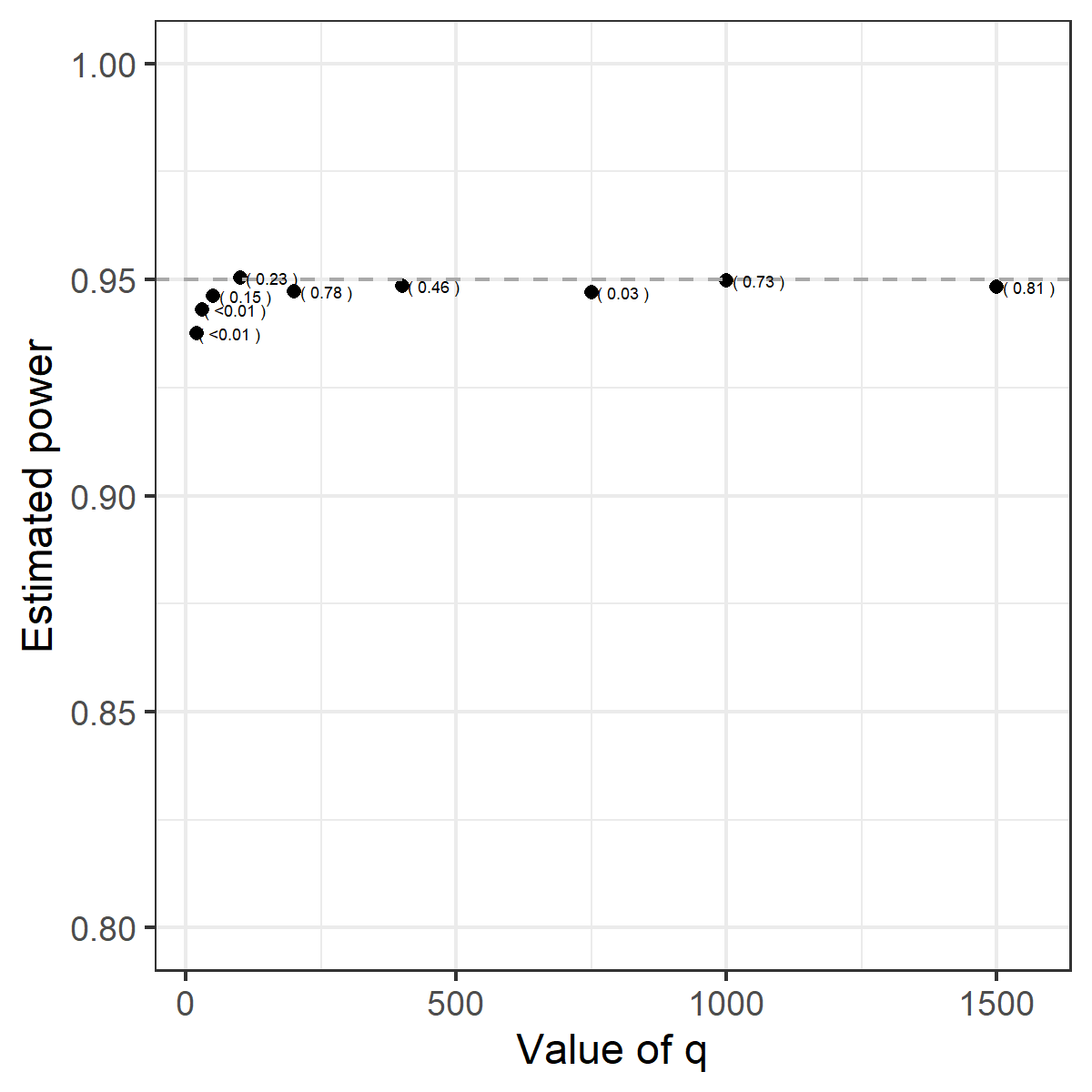}
	\caption{Estimated power as a function of $q$ based on $\bm{x}_1,\ldots,\bm{x}_{q}$. The values in parenthesis is p-values from a Kolmogorov Smirnoff test for a Uniform distribution. When the hypothesis is true and the distributional assumption is correct, the p-values should be uniformly distributed on the interval from zero to one. }
	\label{fig:PowerPlotNorm }
\end{figure}

\begin{figure}[htbp]
	\centering
	\includegraphics[scale=0.8]{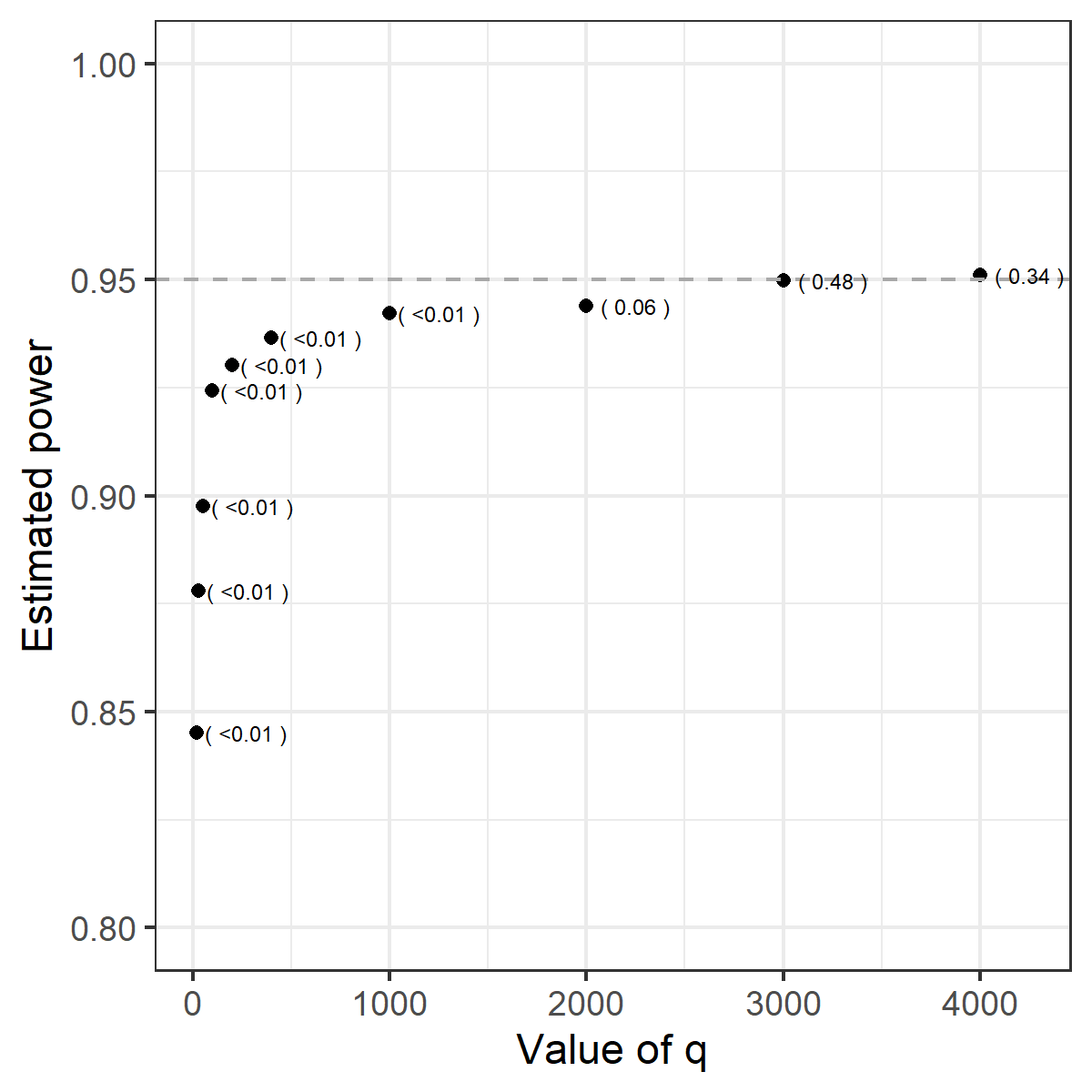}
\caption{Estimated power as a function of $q$ based on $\bm{y}_1,\ldots,\bm{y}_{q}$. The values in parenthesis is p-values from a Kolmogorov Smirnov test for a Uniform distribution. When the hypothesis is true and the distributional assumption is correct, the p-values should be uniformly distributed on the interval from zero to one.}
\label{fig:powerplotT}
\end{figure}

\FloatBarrier

\subsubsection{Graphical Representation of the Latent Covariance 
Structure in the Multivariate Model using a Statistical Test}

We can draw conclusions regarding the covariance structure of the random components by applying the tests described above to the 
estimated covariance matrix of the random components in the multivariate model described in Section~\ref{subsec:ModelDef}, and by
using the theory of graphical models as described in Section~\ref{subsec:GraphM}. 

Under the model formulated in Section~\ref{subsec:ModelDef}, we can estimate the covariance matrix of the random components consistently based on consistent predictions of the random components by applying 
Proposition~\ref{thm:convCovEst}. In this case, the distribution of the above-described tests will be asymptotically for the number of levels, $q$, and the number of observations, $n$, increasing.
In the case where we use the asymptotically approximately maximum likelihood estimator of the covariance matrix as described in  
Section~\ref{subsubsec:MLgaus} and \ref{subsubsec:MLellip}, the estimator is also consistent. Therefore, the tests still apply asymptotically. The same applies to another consistent estimate of the covariance matrix.

 We can examine the latent covariance structure in general by testing if the value 
 of each off-diagonal entry in the conditional covariance matrix is equal to zero. If 
 the p-value (possibly corrected for multiple 
 testing) is below a given significance level, we connect the corresponding nodes 
 by and edge. After constructing an undirected graph, 
 we can combine the undirected independence graph with the 
 responses as described in Section~\ref{subsec:GraphM}.
On the other hand, it might be of interest to test for a specific covariance 
structure of the latent variables. Here, the number of tests can be reduced using 
the structure of graphical models. It is possible to apply the test for independence 
between different groups of variables, without conditioning on a separating set, to 
test for independence between the isolated subgraphs in the graph (if any).  

\FloatBarrier
\subsection{Simulation Studies\label{subsec:SimStudy} }
	In this section, we perform a simulation study to examine the power of the two 
types of tests introduced in this paper under
multivariate generalised linear mixed models in the case of Gaussian and 
t-distributed random 
components. 
We simulate a two dimensional generalized linear mixed model with the 
conditional distributions being Gamma and Poisson, respectively.
We use a logarithm link function in both marginal models. Since our primary 
interest in this simulations study is the covariance structure of the random 
components, we will simulate a  model only including a constant in the fixed 
effects (the value of this constant was set to $0.6$).   The data was simulated 
with 
the length of the vectors of random components being 
$q=800$ (corresponding to $800$ experimental units or clusters), and with $40$ 
replicates for each unit giving $32\,000$ observations. 
Three models were simulated with different distributional assumptions for the 
random components (all having expectation zero), \ie a multivariate 
Gaussian, a multivariate  t-distribution with $11$ degrees of freedom, and a 
multivariate t-distribution with $7$ degrees of freedom. We estimated the 
power (probability of rejecting the null hypothesis)
of the tests for the different models on a grid of values for the 
off-diagonal entry
in the covariance matrix of the random components representing the same 
experimental unit. The covariance matrix is given by
\begin{align*}
	\bm{\Sigma} = \begin{Bmatrix}
		0.8166 & \sigma_{12}\\
		\sigma_{12} & 0.91302
	\end{Bmatrix},
\end{align*}
where $\sigma_{12}$ is varied on the grid $G=(0,0.02,0.04,0.1)$.

In each round of the simulation study, we test the hypothesis
\begin{align*}
	H_0: \sigma_{12}=0,
\end{align*}
and the resulting p-values are used to estimate the power for each point in $G$. 
Notice, that in the Gaussian case, $H_0$ implies independence, whereas, in the 
elliptical 
case it implies un-correlation.
We limit ourself to a two dimensional model partly because 
of the computational time and the preference of a high dimension of the vector 
of random components (as we saw in \ref{sec:simStudConvRat}, we need a high 
number of levels of the random components when 
these are assumed to be multivariate t-distributed), but also because it is 
difficult to control that a high dimensional covariance matrix stay positive 
definite when 
changing the off-diagonal values.

We would expect that the probability of rejecting the null hypothesis increases
when the corresponding entry in the covariance matrix is moved away 
from zero.  For each grid point in $G$, the model was simulated $500$ times 
and a
p-value for testing $H_0$ was calculated for each simulation. 
Thus, for each grid 
point, the probability of rejecting the null hypothesis
could be estimated based on the p-values.
Figure~\ref{fig:powerCurve} shows the estimated probabilities of rejecting the 
hypothesis (at a significance level of five percent) as a function of the 
off-diagonal value in 
the covariance matrix for each combination of model and test. 
From the figure, we conclude that when the random components are 
Gaussian distributed both tests reach the correct significance levels under the 
null hypothesis. However, the curve for the Gaussian test is steeper than the 
elliptical test in the part close to zero meaning that the test has a higher power 
to detect small deviations from the null hypothesis. In the case of t-distributed 
random components, the test based on normality rejects to often under the null 
hypothesis which lead to a power curve with a higher intersection with the 
y-axis. Moreover, we see that the shape of the curve differs from the power 
curve for the elliptical test. This result imply that it would be preferable to use 
the elliptical test in cases where the normality of the random components are 
uncertain.

We would expect, that the p-values follow a uniform distribution on zero to one 
under the hypothesis $H_{0}$.
In Figure~\ref{fig:qqPlotSimStud}, we present a Q-Q plot of the observed 
quantiles of  the calculated p-values versus the theoretical uniform quantiles 
based on $500$ simulations for each model and for each test. Recall, that we 
simulated three different models, where the random components followed 
either a multivariate 
Gaussian, a multivariate t-distribution with $11$ degrees of freedom or a 
multivariate t-distribution with $7$ degrees of freedom. For each model, we 
compared two 
different tests: a test based on normality and a test based on a general 
elliptically contoured distribution. The number added to each plot is the 
resulting p-values from a Kolmorogov-Smirnov test comparing the empirical 
distribution with the uniform distribution. As expected, the test based on an 
assumption of normality performs badly for the models where the random 
components follow a t-distribution.

\begin{figure}
	\centering
	\includegraphics[scale=0.6]{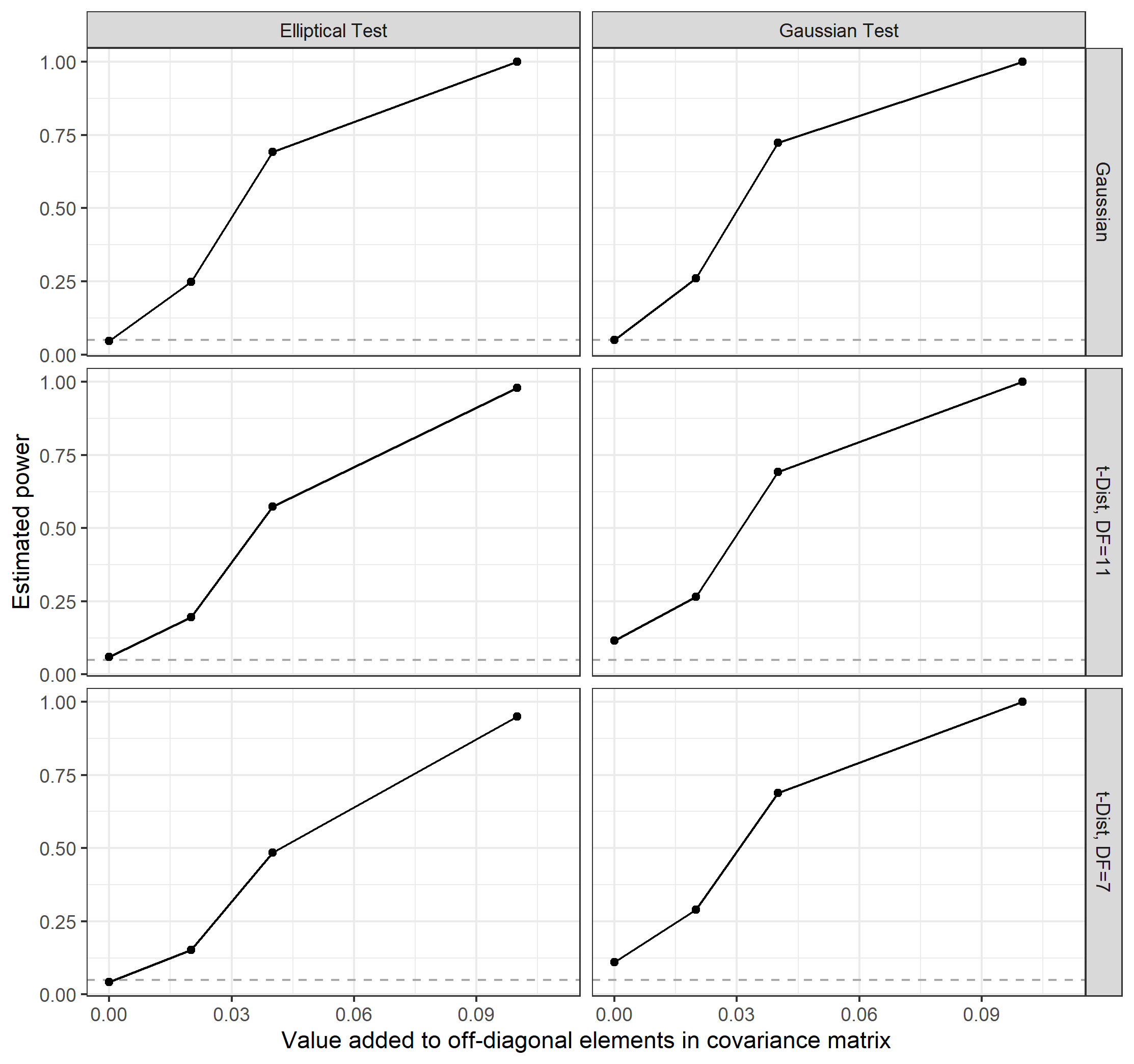}

	\caption{Power curves showing the estimated probability of rejecting the 
		hypotheses $H_{0}$ as a function of the off diagonal values in the 
		covariance 
		matrix for each test and distributional model for the random components. 
		\label{fig:powerCurve}}
\end{figure}

\begin{figure}
	\centering
	\includegraphics[scale=0.6]{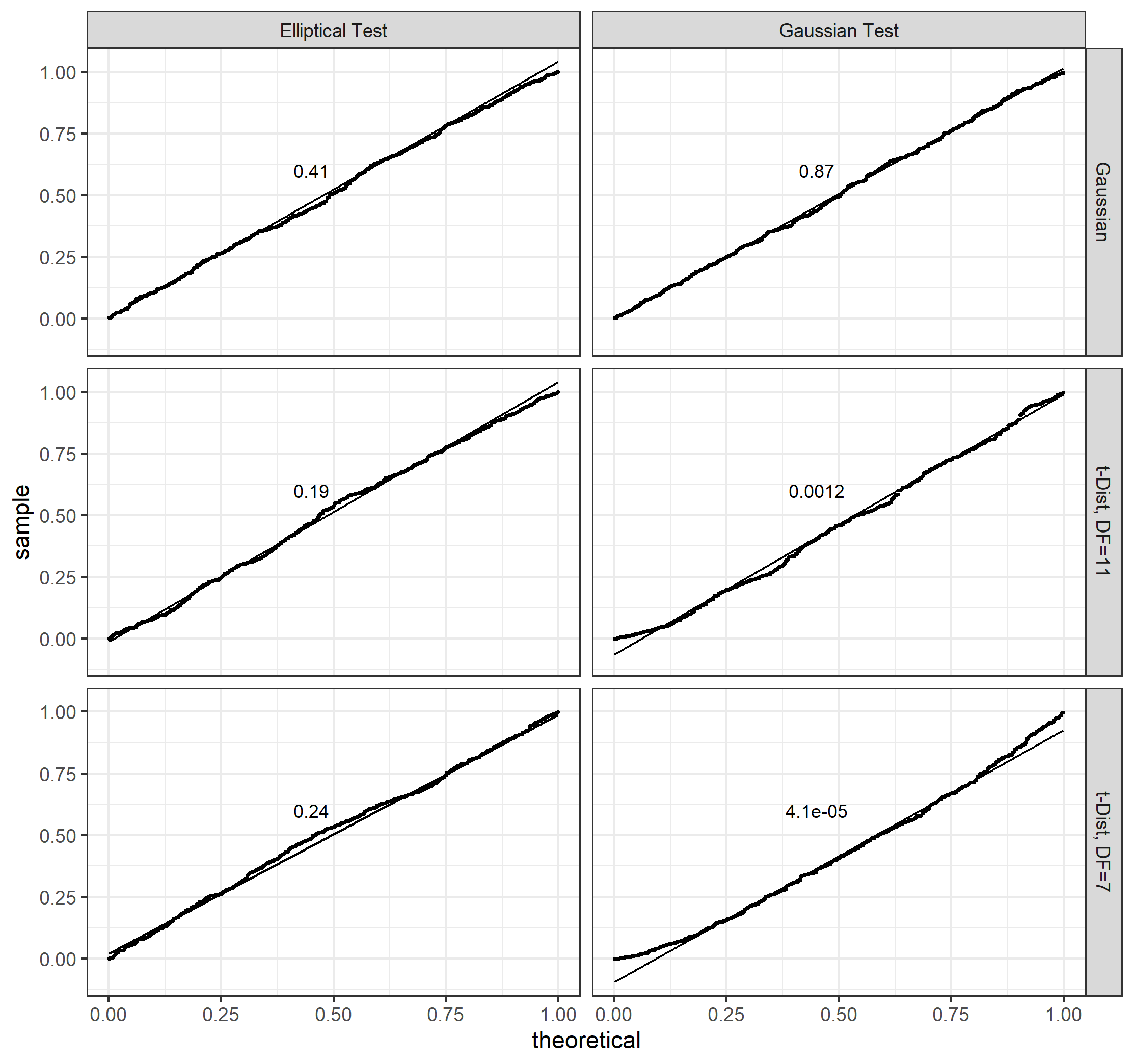}

	\caption{QQ-plots showing the quantiles of the p-values under the null 
		hypothesis against the theoretical quantiles of the uniform distribution for 
		each 
		test and distributional model for the random components. We 
		used the Kolmogorov-Smirnov test comparing the empirical distribution with 
		the uniform distribution. The resulting p-values are shown in the plots. 
		\label{fig:qqPlotSimStud} }
\end{figure}

\section{Discussion and Conclusion\label{sec:Con}}
The method for studying the dependence structure of multivariate responses described in this paper combines MGLMMs with the theory of graphical models and a variation of the tests for correlation and conditional correlation described in \cite{Anderson2003}. We constructed the MGLMMs used in this paper by joining marginal GLMMs that are based on weaker assumptions as compared to the literature (e.g., \citeauthor{Breslow1993}, \citeyear{Breslow1993}, \citeauthor{Mcculloch2001}, \citeyear{Mcculloch2001}, \citeauthor{McCulloch1997}, \citeyear{McCulloch1997}). Indeed, we do not assume the random components to be multivariate normally distributed. Moreover, we use dispersion models (which includes exponential dispersion models as a particular case) to define the conditional distributions of the responses given the random components. While \cite{Pelck2021A} developed techniques for estimating fixed effects and predicting random components of those MGLMMs, we concentrated here on the construction of methods for studying the correlation structure of multivariate responses.  The nature of the tests we used here forced us to restrict the distribution of the random components to be regular elliptically contoured distributions (including the multivariate normal distribution),  which is less general than the class of distributions of the random components used in \cite{Pelck2021A}. Still, the assumptions on the distribution of the random components used here are weak and yield a flexible class of MGLMMs. For example, we can use models with multivariate t-distributed random components, which have heavier tails than Gaussian random components. Our simulation studies demonstrate that MGLMMs with Gaussian random components might perform poorly in the presence of heavy tail distributions (\eg the significance level of the correlation tests becomes systematically wrong).

Multivariate generalised linear models (and MGLMMs) can be constructed by connecting several marginal generalised linear models (or GLMMs) using copulas. For instance \cite{Song2009} use Gaussian copulas for constructing multivariate dispersion models. While this approach might be fruitful in some contexts, it cannot be directly applied in the type of analysis we discuss in this paper because the distribution of the random components after applying the copula transformation are in general not elliptically contoured and therefore the tests we use here are applicable. 

Remarkably, the proposed test for elliptically contoured distributed random components does not depend on the choice of the elliptically contoured distribution used. Indeed, the test statistic of those tests depends only on the estimate of the covariance matrix. Therefore, we might view this test as a semiparametric test since the class of regular elliptically contoured distributions is not finite-dimensional. Naturally, the test based on the multivariate normal distribution is advantageous relative to the generic test based on elliptically contoured distributions when the random components are Gaussian distributed. We illustrate this claim in the simulation studies presented.

The inferential techniques described in this paper were applied in several fields recently. For instance, in \cite{pelck2020C} the method described above was used in a study of a system for monitoring the development of roots over time,  which involved binomial, and Poisson distributed
responses. Another example is presented in \cite{Pelck2021F} where our methods were applied to study the dependence structure of responses representing the development of a fungal disease and the concentration of volatile organic compounds. Those responses were modelled by \cite{Pelck2021F} using Gamma, binomial and compound Poisson families of distributions. Furthermore, in a third study, \cite{pelck2021D} used the methods studied here to discuss the covariance structure of the students' marks obtained in different admission exams at the University (Gaussian distributed) and the number of attempts required to pass the course of geometry (a Cox proportional model with discrete-time). Those examples illustrate the usefulness of the statistical tools studied in this paper. 

\section{Acknowledgement }
The authors were partially financed by the Applied Statistics Laboratory (aStatLab) 
at the Department of Mathematics, Aarhus University.

\bibliography{bibtexMaster}

\appendix

\section{Appendix}
\subsection{Estimation of Covariance Matrix
	\label{subsec:covEst}}
The methods that we will present in this paper rely on either an estimate of the 
covariance matrix proportional to the maximum likelihood estimate or a 
consistent estimate.
In this section, we discuss how such an estimate can be obtained based on 
consistent predictions of the random components. Such predictions can be 
obtained using the inference method described in \cite{Pelck2021A}.

A consistent estimate (for $n$ and $q$ increasing) of the covariance matrix can 
be found by calculating the sample covariance of the predicted values as we will 
see in Proposition~\ref{thm:convCovEst}.

In the case where we only have few cluster, \ie $q$ is small, we suggest a method 
to obtain an approximated maximum likelihood estimate of the covariance matrix. 
Here, we consider the 
general case were the random components follow an elliptically contoured 
distribution, and the special case where this distribution is assumed to be 
multivariate Gaussian separately.

\begin{prop} \label{thm:convCovEst}
	Consider the model described in Section~\ref{subsec:ModelDef}.
	For $j=1,\ldots,q$, let $\hat{\bm{b}}_j^{n}$ denote a $d$-dimensional vector 
	of predicted values of the random components corresponding to the $q\tth$ 
	cluster, $\bm{B_j}$, based on 
	at least $n=\min\{n_1,\ldots,n_q\}$ observations. Moreover, assume that
	\begin{align*}
		\hat{\bm{b}}_j^{n} \overset{P}{\longrightarrow} \bm{B }_j \quad \text{for 
		} 
		n \to \infty.
	\end{align*}
	Then,
	\begin{align*}
		\hat{\bm{\Sigma}}_q = \frac{1}{q-1}\sum_{j=1}^q 
		(\hat{\bm{b}}_j^{n_j}-\bar{\hat{\bm{b}}}_q)(\hat{\bm{b}}_j^{n_j}-\bar{\hat{\bm{b}}}_q)^T
		\, \overset{P}{\longrightarrow} \, \bm{\Sigma} \quad \text{for } q,n \to \infty,
	\end{align*}
	where $\bar{\hat{\bm{b}}}_q^n=\frac{1}{q}\sum_{j=1}^q 
	\hat{\bm{b}}_j^{n}$.
\end{prop}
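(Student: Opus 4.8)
The plan is to establish the convergence in two stages. First I would argue that $\hat{\bm b}_j^{n_j} \to \bm B_j$ in probability carries over to the sample covariance of the predictions converging to the sample covariance of the true random components, and second I would invoke the law of large numbers for the sample covariance of i.i.d.\ (or at least identically distributed and independent, by the model assumptions that $\bm B_l \upvDash \bm B_{l'}$ for $l \neq l'$) vectors with finite second moments — which is guaranteed by regularity condition~1 on the elliptically contoured distribution.

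\medskip

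More concretely, write $\hat{\bm\Sigma}_q$ as a function of the pairs $(\hat{\bm b}_j^{n_j}, \bar{\hat{\bm b}}_q)$ and compare it with the ``oracle'' estimator
\begin{align*}
	\bm S_q = \frac{1}{q-1}\sum_{j=1}^q (\bm B_j - \bar{\bm B}_q)(\bm B_j - \bar{\bm B}_q)^T,
\end{align*}
where $\bar{\bm B}_q = \frac1q\sum_{j=1}^q \bm B_j$. First I would show $\hat{\bm\Sigma}_q - \bm S_q \overset{P}{\longrightarrow} 0$. Decomposing $\hat{\bm b}_j^{n_j} = \bm B_j + \bm\varepsilon_j^{n_j}$ with $\bm\varepsilon_j^{n_j}\overset{P}{\to}0$ as $n\to\infty$, one expands the outer products and collects the cross terms; each such term is an average over $j$ of products involving at least one factor $\bm\varepsilon_j^{n_j}$, and a Cauchy--Schwarz / Markov argument (together with the moment bound that makes $\bm S_q$ tight) shows these vanish in probability as $n,q\to\infty$. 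Second, $\bm S_q \overset{P}{\longrightarrow} \bm\Sigma$ is the standard weak law for sample covariance matrices: the $\bm B_j$ are independent with common mean zero (regularity condition~4) and common covariance matrix $\bm\Sigma$ (proportional to the scatter matrix $\bm\Lambda$), and the fourth-order moments exist (regularity condition~1), so each entry of $\bm S_q$ is an average of i.i.d.\ terms with finite mean and the Khinchin law of large numbers applies; the Bessel correction $1/(q-1)$ versus $1/q$ is asymptotically negligible. Combining the two steps via Slutsky yields $\hat{\bm\Sigma}_q \overset{P}{\longrightarrow} \bm\Sigma$.

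\medskip

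The main obstacle I anticipate is controlling the interaction between the two limiting regimes ($n\to\infty$ and $q\to\infty$) in the cross terms: the convergence $\bm\varepsilon_j^{n}\overset{P}{\to}0$ is only asserted for each fixed $j$ as $n\to\infty$, so to make $\frac1q\sum_j (\text{terms in }\bm\varepsilon_j)$ small uniformly one needs either a uniform-in-$j$ control on the prediction error (which the hypothesis, read as $n = \min_j n_j \to\infty$, essentially supplies since every cluster uses at least $n$ observations), or an iterated-limit argument where $n\to\infty$ first. I would handle this by treating $n\to\infty$ as the inner limit and $q\to\infty$ as the outer one, noting that for fixed $q$ the finite sum $\hat{\bm\Sigma}_q \overset{P}{\to} \bm S_q$ as $n\to\infty$ is immediate from the continuous mapping theorem, and then letting $q\to\infty$. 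A short remark that the joint limit follows under the stated hypothesis (uniform prediction consistency through $n=\min_j n_j$) closes the argument.
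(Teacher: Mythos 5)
Your proposal is correct and follows essentially the same route as the paper's own (one-sentence) proof: consistency of the predicted random components, continuity of the sample-covariance mapping (your comparison with the oracle estimator $\bm S_q$), and the law of large numbers for the independent, identically distributed cluster vectors. You in fact spell out the $n,q$ double-limit issue more carefully than the paper, which simply asserts the three ingredients without addressing it.
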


\begin{proof}
	The proof follows from the fact that the predicted values of the random 
	components are consistent, the continuity of the sample covariance mapping 
	and 
	that the average converges to the expectation for $q$ increasing.
\end{proof}

We present below an approximation of the maximum likelihood function for 
estimating $\bm{\Sigma}$ based on the predicted values of the 
random components in the case of a multivariate Gaussian distribution, which can 
be used to estimate $\bm{\Sigma}$.

\subsubsection{Approximated maximum likelihood for estimating 
	\texorpdfstring{$\bm{\Sigma}$}{\Sigma} in the case of  Gaussian random 
	components\label{subsubsec:MLgaus}}

Consider the model described in Section~\ref{subsec:ModelDef}, where we 
assume that $\bm{B}_1,\ldots,\bm{B}_q$ are i.i.d Gaussian distributed with 
expectation zero and covariance matrix $\bm{\Sigma}$.
We let for $j=1,\ldots,q$, $\hat{\bm{b}}_j^{n}$ denote a $d$-dimensional 
vector 	of predicted values of the random components corresponding to the 
$j\tth$ cluster, $\bm{B_j}$, based on at least $n=\min\{n_1,\ldots,n_q\}$ 
observations. Moreover, we assume that the predicted values are conditional 
asymptotically Gaussian distributed (as in \cite{Pelck2021A}) for $n$ 
increasing given $\bm{B_j}=\bm{b}_j$ with conditional expectation 
$\bm{b}_j$ and covariance matrix $\bm{V}_{j}$.
Notice, that by the model assumptions, $\bm{V}_{j}$ is a diagonal matrix.

When $q$ is small, we can maximise the following with respect to 
$\bm{\Sigma}$ by inserting an estimate $\hat{\bm{V}}_{j}$ of $\bm{V}_{j}$:
\begin{align*}
	L(\bm{\Sigma}; \hat{\bm{b}}_1^{n},\ldots,\hat{\bm{b}}_1^{n}) &=
	\prod_{j=1}^q \int_{\R^d} \varphi(\bm{b}_j;\bm{\Sigma}) 
	h(\hat{\bm{b}}_j^{n};\bm{b}_j,\hat{\bm{V}}_{j}) d\bm{b}_j\\
	&=\prod_{j=1}^q \int_{\R^d} \vert 2\pi \bm{\Sigma}\vert^{-1/2} \exp 
	\left(-\tfrac{1}{2}\bm{b}_j^T\bm{\Sigma}^{-1}\bm{b}_j \right)\\
	&\quad \quad \quad		 \vert 2\pi \hat{\bm{V}}_{j}\vert^{-1/2}\exp \left 
	(-\tfrac{1}{2}(\bm{b}_j-\hat{\bm{b}}_j^{n})^T\hat{\bm{V}}_{j}^{-1}(\bm{b}_j-\hat{\bm{b}}_j^{n})\right)
	d\bm{b}_j\\
	&=\prod_{j=1}^q (2\pi)^{-d} \vert \bm{\Sigma} 
	\hat{\bm{V}}_{j}\vert^{-1/2} 
	\exp \left 
	(-\tfrac{1}{2}(\hat{\bm{b}}_j^{n})^T\bm{\Sigma}^{-1}\hat{\bm{b}}_j^{n}\right
	)\\
	&\quad \int_{\R^d} \exp \left 
	(\bm{b}_j^T\hat{\bm{V}}_{j}^{-1}\hat{\bm{b}}_j^{n}\right )
	\exp \left 
	(-\tfrac{1}{2}\bm{b}_j^T(\hat{\bm{V}}_{j}^{-1}+\bm{\Sigma}^{-1})\bm{b}_j
	\right 
	) d\bm{b}_j\\
	&=\prod_{j=1}^q	(2\pi)^{-d}  \vert \bm{\Sigma} 
	\hat{\bm{V}}_{j}\vert^{-1/2} 
	\exp \left 
	(-\tfrac{1}{2}(\hat{\bm{b}}_j^{n})^T\bm{\Sigma}^{-1}\hat{\bm{b}}_j^{n} 
	\right )\\
	&\quad \vert 2\pi 
	(\hat{\bm{V}}_{j}^{-1}+\bm{\Sigma}^{-1})^{-1}\vert^{1/2}
	\exp \left (\tfrac{1}{2} (\hat{\bm{V}}_{j}^{-1}\hat{\bm{b}}_j^{n})^T
	(\hat{\bm{V}}_{j}^{-1}+\bm{\Sigma}^{-1})^{-1}\hat{\bm{V}}_{j}^{-1}\hat{\bm{b}}_j^{n}
	\right)\\
	&=\prod_{j=1}^q	(2\pi)^{-d/2} 
	\vert \bm{\Sigma} 
	\hat{\bm{V}}_{j}\vert^{-1/2} 
	\vert (\hat{\bm{V}}_{j}^{-1}+\bm{\Sigma}^{-1})\vert^{-1/2}\\
	&\quad 
	\exp\left (-\tfrac{1}{2}(\hat{\bm{b}}_j^{n})^T
	[\bm{\Sigma}^{-1}-(\hat{\bm{V}}_{j}+\hat{\bm{V}}_{j}\bm{\Sigma}^{-1}\hat{\bm{V}}_{j})^{-1}]
	\hat{\bm{b}}_j^{n}\right ).
\end{align*}

\subsubsection*{Approximated maximum likelihood for estimating 
	\texorpdfstring{$\bm{\Sigma}$}{\Sigma} in the case of  general elliptical 
	contoured random components \label{subsubsec:MLellip}}

Consider the model described in Section~\ref{subsec:ModelDef}, where we 
assume that $\bm{B}_1,\ldots,\bm{B}_q$ are i.i.d elliptically contoured 
distributed with 
expectation zero and covariance matrix $\bm{\Sigma}$ for a given choice of 
the function $h$ in \eqref{eq:denEllip}.
We let for $j=1,\ldots,q$, $\hat{\bm{b}}_j^{n}$ denote a $d$-dimensional 
vector 	of predicted values of the random components corresponding to the 
$j\tth$ cluster $\bm{B_j}$, based on at least $n=\min\{n_1,\ldots,n_q\}$ 
observations. Moreover, we assume that the predicted values are conditional 
asymptotically Gaussian distributed (as in \cite{Pelck2021A}) for $n$ 
increasing given $\bm{B_j}=\bm{b}_j$ with conditional expectation 
$\bm{b}_j$ and covariance matrix $\bm{V}_{j}$.
Notice, that by the model assumptions, $\bm{V}_{j}$ is a diagonal matrix.

When $q$ is small, we can maximise the following with respect to 
$\bm{\Sigma}$ by inserting an estimate $\hat{\bm{V}}_{j}$ of $\bm{V}_{j}$ 
and using a Gaussian Hermite approximation of the integral:
\begin{align*}
	L(\bm{\Sigma}; \hat{\bm{b}}_1^{n},\ldots,\hat{\bm{b}}_1^{n}) &=
	\prod_{j=1}^q \int_{\R^d} \varphi(\bm{b}_j;\bm{\Sigma}) 
	h(\hat{\bm{b}}_j^{n};\bm{b}_j,\hat{\bm{V}}_{j}) d\bm{b}_j\\
	&= 
	\prod_{j=1}^q \int_{\R^d} \varphi(\bm{b}_j;\bm{\Sigma}) \vert 2\pi 
	\hat{\bm{V}}_{j} 
	\vert^{-1/2} \exp\big(-\tfrac{1}{2} (\bm{b}_j-\hat{\bm{b}}_j^{n})^T 
	\bm{V}_{j}^{-1}  (\bm{b}_j-\hat{\bm{b}}_j^{n})\big ) d\bm{b}_j\\
	&=
	\pi^{-d/2}  
	\prod_{j=1}^q \int_{\R^d} 
	\varphi(\sqrt{2}\hat{\bm{V}}_{j}^{1/2}\tilde{\bm{b}}_j+\hat{\bm{b}}_j^{n};\bm{\Sigma})
	\exp\big(-\tilde{\bm{b}}_{j}^T\tilde{\bm{b}}_{j} \big ) d\tilde{\bm{b}}_j\\
	&\approx
	\pi^{-d/2}  
	\prod_{j=1}^q \sum_{\bm{k}\in \mathcal{K}} 
	(w_{k_1} \ldots 
	w_{k_d})\varphi(\sqrt{2}\hat{\bm{V}}_{j}^{1/2}{\bm{x}}_{\bm{k}}+
	\hat{\bm{b}}_j^{n};\bm{\Sigma}), \\		
\end{align*}
with $\mathcal{K}=\{1,\ldots,l\}^d$, 
$\bm{w}_{\bm{k}}=(w_{k_1},\ldots,w_{k_d})$, 
$\bm{x}_{\bm{k}}=(x_{k_1},\ldots,x_{k_d})$, where $x_{k_j}$ denotes the 
$k_j\tth$ root of the Hermite polynomial with $l$ nodes and $w_{k_j}$ is 
the associated weight.

\subsection{Density of $V$ in Case of Gaussian Random Components 
\label{App:testVden}}
In this section we present a formula for the density of the distribution of $V$ 
defined in Equation~\eqref{eq:testV}. 
Recall, that $V$ is distributed according to
\begin{align*}
	V \sim \prod_{i=2}^{k-1}\prod_{j=1}^{d_i} Z_{ij},
\end{align*}
where the random variables $Z_{21},\ldots,Z_{(k-1)d_{k-1}}$ are independent 
and $Z_{ij}\sim \text{Beta}  \big (\tfrac{1}{2}[q-\bar{d}_i-j], \tfrac{1}{2} 
\bar{d}_{i} \big)$ with $\bar{d}_i=d_1+\ldots+d_{i-1}$ for $i=2,\ldots,(k-1)$ 
and $j=1,\ldots,d_i$.

Let $d=d_2+\ldots+d_{k-1}$. We adapt the notation in \cite{Tang1984} to 
obtain the density of $V$. Define $t_1=(2,1),  \ldots, t_{d_2}=(2,d_2), 
t_{d_2+1}=(3,1), \ldots, t_{d_2+d_3}=(3,d_3),\ldots, 
t_{d_2+\ldots+d_{k-2}+1}=(k-1,1),  \ldots,$\\  
$t_{d_2+\ldots+d_{k-1}}=(k-1,d_{k-1})$.
The density of $V$ can then be formulated as
\begin{align}
	f_{V}(v)=K_{d}v^{b(d)-1}(1-v)^{h(d)-1}\sum_{r=0}^{\infty} 
	\sigma_r^{(d)}(1-v)^r \quad \text{ for } 0<v<1,\label{eq:prodBetaDen}
\end{align}
where
\begin{align*}
	K_d&=\prod_{j=1}^d  \frac{\Gamma [c(j)]}{\Gamma [b(j)] },\\
	h(d)&=\sum_{j=1}^d[c(j)-b(j)]  \quad \text{for } 
\end{align*}
with $\Gamma(\cdot)$ denoting the Gamma function, $b(j)= 
\tfrac{1}{2}(q-\bar{d}_{t_j^{(1)}}-t_j^{(2)})$ and $c(j)=\tfrac{1}{2} 
\bar{d}_{t_j^{(1)}} + b(j)$ for $t_j=(t_j^{(1)},t_j^{(2)})$. The term 
$\sigma^{(d)}_r$ can be calculated by the recursive relation:
\begin{align*}
	\sigma_r^{(j)}=\frac{\Gamma[h(j-1)+r]}{\Gamma[h(j)+r]}\sum_{s=0}^r \Big 
	[ \frac{c(j)-b(j-1)}{s!} \sigma_{r-s}^{(j-1)} \Big ], \, r=0,1,2,\ldots, \,\, 
	j=2,3,\ldots,d, 
\end{align*}
with initial values $\sigma_0^{(1)}=(\Gamma[h(1)])^{-1}$ and 
$\sigma_r^{(1)}=0$ for $r=1,2,\ldots$.
Notice, that 
\begin{align*}
	\underset{j\to \infty}{\lim} \frac{\Gamma[h(j-1)+r]}{\Gamma[h(j)+r]} =0,
\end{align*}
such that the infinite sum in  \eqref{eq:prodBetaDen} can be truncated after some 
point. 

\end{document}